\newcommand{\be}{\begin{equation}}
\newcommand{\ee}{\end{equation}}
\newcommand{\bea}{\begin{eqnarray}}
\newcommand{\eea}{\end{eqnarray}}
\newcommand{\distas}[1]{\mathbin{\overset{#1}{\kern\z@\sim}}}%
\newsavebox{\mybox}\newsavebox{\mysim}
\newcommand{\distras}[1]{%
  \savebox{\mybox}{\hbox{\kern3pt$\scriptstyle#1$\kern3pt}}%
  \savebox{\mysim}{\hbox{$\sim$}}%
  \mathbin{\overset{#1}{\kern\z@\resizebox{\wd\mybox}{\ht\mysim}{$\sim$}}}%
}
\newcommand{\norm}[1]{
\|{#1}\|^2}
\newtheorem{theorem}{Theorem}
\newtheorem{lemma}{Lemma}
\newtheorem{definition}{Definition}
\definecolor{LightCyan}{rgb}{0.88,1,1}
\definecolor{shadecolor}{rgb}{0.01,0.4,.8}
\begin{document}
\begin{center} { \large \sc Bayesian predictive densities as an interpretation of a class of Skew--Student $t$ distributions with application to medical data}
\vskip 0.1in {\bf Abdolnasser Sadeghkhani}
\\
\vskip 0.1in
Department of Mathematics and Statistics, Queen's University, Kingston, ON, Canada
\\ E-mail: a.sadeghkhani@queensu.ca \\

\end{center}

\vspace*{0.5cm}

\normalsize
\begin{abstract}
This paper describes a new Bayesian interpretation of a class of skew--Student $t$ distributions.
We consider a hierarchical normal model with unknown covariance matrix and show that by imposing different restrictions on the parameter space, corresponding Bayes predictive density estimators under Kullback-Leibler loss function embrace some well-known skew--Student $t$ distributions. We show that obtained estimators perform better in terms of frequentist risk function over regular Bayes predictive density estimators. We apply our proposed methods to estimate future densities of medical data: the leg-length discrepancy and effect of exercise on the age at which a child starts to walk.
\end{abstract}
\textit{AMS 2010 subject classifications: }
60E05, 62F15, 62F30 

\noindent {\it Keywords and phrases}:
Bayesian predictive density, Constrained hierarchical model, Kullback--Leibler loss function, Skew--Student $t$ distribution. 

\section{Introduction}
Normal density has been used to analyze data in many applications for decades. However, Roberts (1998) as an example of a weighted normal model introduced the skew--normal density for asymmetric data set, but he did not use the term ``skew--normal" then. Azzalini (1985) formalized the skew normal distribution as a generalization of a normal distribution which can be used to model asymmetric data. His work inspired many statisticians to study different versions of the skew--normal distributions rapidly. 
However, for many applications, the skew-normal model fails to provide a good fit for skewed data due to the excessive kurtosis of the data such as long--tailed data which occurs in several cases such as in finance and insurance business.
Hansen (1994) proposed a skew extension to the Student $t$ distribution for modeling financial returns. Since then, several other papers have studied different versions of skew Student $t$ distributions for financial and other applications, see e.g. Bauwens and Laurent (2005); Branco and Dey (2001); Fernandez and Steel (1998) and Patton (2004).

Following Azzalini and Valle (1996),
a random variable $T$ is said to have a  skew--normal distribution $\mathbb{SN}_p(\lambda, \xi, \tau)$, if the probability density function (pdf) is
\begin{align}\label{sn}
\frac{2}{\tau^p}\,\phi_p(\frac{t-\xi}{\tau})\,\Phi(\lambda^T \frac{t-\xi}{\tau}), \,\,\, \, \,t \in \mathbb{R}^p,
\end{align}
where $\phi_p(\cdot)$ is the pdf of a $p$ variate normal distribution, $\Phi(\cdot)$ is the cumulative density function (cdf) of a univariate standard normal density and $\lambda \in \mathbb{R}^p$ is the shape parameter which determines the skewness and $\lambda^T$ is its transpose.
Another version of density (\ref{sn}) introduced by Gupta et al. (2004) which denoted by $\mathbb{SN}_p(\alpha_0, \alpha_1, \xi, \tau)$ is given by
\begin{align}\label{sn'}
\frac{1}{\tau^p}\,\phi_p(\frac{t-\xi}{\tau})\frac{\Phi_p(\alpha_0 +\alpha_1 \frac{t-\xi}{\tau};\, 0)}{\Phi_n\left(\frac{\alpha_0}{\sqrt{1+\alpha_1^T \alpha_1}}, \cdots, \frac{\alpha_0}{\sqrt{1+\alpha_1^T \alpha_1}};\, \rho=\frac{\alpha_1^T \alpha_1}{1+\alpha_1^T \alpha_1}\right)},\,\,\, \, \,t \in \mathbb{R}^p,
\end{align}
where $\Phi_p(\cdot; \,0)$ and and $\Phi_n(\cdot; \,\rho)$ are cdf's of a $\mathbb{N}_p(0, I_p)$ and $\mathbb{N}_n(0, \Lambda)$ distributions respectively and $ \Lambda=(1-\rho)\,I_n+ \rho\,{I}_n {I}^T_n$ is the covariance matrix .

Furthermore, a Student $t$ distribution with degrees of freedom $\nu > 0$, location parameter $\xi$ and scale parameter $\tau$, denoted by $\mathbb{T}_p(\nu, \xi, \tau)$ has a density on $\mathbb{R}^P$, is given by
\begin{equation}\label{tls}
    \frac{1}{\tau^p}\frac{\Gamma(\frac{\nu+p}{2})}{\Gamma(\frac{\nu}{2})(\pi \nu)^{\frac{p}{2}}}\left(1+\frac{\norm{t-\xi}}{\nu\tau^2}\right)^{-\frac{\nu+p}{2}},\,\,\, \, \,t \in \mathbb{R}^p.
\end{equation}

 
\begin{definition}[\bf Skew--Student $t$ distribution]\label{ST1}
A random variable follows a skew--Student $t$ distribution, denoted by $\mathbb{ST}_p(\nu, \alpha_0, \alpha_1, \xi, \tau)$ with location $\xi \in \mathbb{R}^p$, scale $\tau>0$, $\nu \in \mathbb{R}_+$, $\alpha_1 \in \mathbb{R}^p$ and $\alpha_0 \in \mathbb{R}$ if $Z=(T-\xi)/\tau$ has pdf 
\begin{equation}\label{STls}
     \mathbb{T}_p (z; \xi, \tau, \nu)\, \frac{
     F_{p}\left(\nu+p, \,(\alpha_0+\alpha_1^T z)\sqrt{\frac{\nu+p}{\nu+ z^T z}} \right)}{F_{p}\left(\nu,\,\frac{\alpha_0}{\sqrt{1+ \alpha_1^T \alpha_1}}\right)},
 \end{equation}
where $\mathbb{T}_p$ denotes pdf a Student $t$ distribution in (\ref{tls}) and $F_{p}(\nu, \cdot)$ is cdf of a standard $p$--variate Student $t$ distribution.
For $p=1$ and $\alpha_0=0$, density (\ref{STls}) reduces to $2\,t(z,\nu)\, F_{1}\left(\nu+1,\,(\frac{\nu+1}{\nu+z^2})^{\frac{1}{2}}\alpha_1 z \right)$, 
and setting $\alpha_0=\alpha_1=0$ reduces (\ref{STls}) to a standard Student $t$ distribution.
\end{definition}
We may conclude another extension as follows.
\begin{definition}\label{st2}
A random variable $T$ follows a skew--Student $t$ distribution, denoted by $\mathbb{ST}_p(\nu, \alpha_0, \alpha_1, \alpha_2, \xi, \tau )$ with location $\xi \in \mathbb{R}^p$, scale $\tau>0$, $\nu \in \mathbb{R}_+$, $\alpha_1 \in \mathbb{R}^p$, $\alpha_0,\, \alpha_2 \in \mathbb{R}$, if $Z=(T-\xi)/\tau \in \mathbb{R}^p$, has pdf 
\begin{equation}\label{STlsm}
     \mathbb{T}_p (z; \xi, \tau, \nu)\, \frac{
     F_{p}\left(\nu+p,\,(\alpha_0+\alpha_1^T z)\sqrt{\frac{\nu+p}{\nu+ z^T z}} \right)-F_{p}\left(\nu+p,\,(\alpha_2+\alpha_1^T z)\sqrt{\frac{\nu+p}{\nu+z^Tz}}\right)}{F_{p}\left(\nu,\,\frac{\alpha_0}{\sqrt{1+\alpha_1^T\alpha_1}}\right)-F_{p}\left(\nu, \,\frac{\alpha_2}{\sqrt{1+ \alpha_1^T\alpha_1}}\right)},
 \end{equation}
where $\mathbb{T}_p$ and $F_{p}$ are as in  Definition \ref{ST1}.
\end{definition}
The distribution (\ref{sn}), for $p=1$, was formerly obtained by O'Hagan and Leonard (1976) via defining the following model.
\begin{equation*}
X \,|\, \theta \sim \mathbb{N}(\theta, \sigma^2)\,,\,\,\,\, \theta \,|\, \mu, \tau^2 \sim \mathbb{N}(\mu, \tau^2),\,\,\, \theta \geq \mu,
\end{equation*}
which the marginal distribution $X$ follows $\mathbb{SN}(\tau/\sigma, \mu, \sqrt{\sigma^2+\tau^2})$. \\
Liseo and Loperdido (2003) introduced another skew--normal density in a multivariate case by suppose that the covariance matrices $\Sigma$ and $\Omega$ are known and we have
\begin{equation*}
X \,|\, \theta \sim \mathbb{N}_p(\theta, \Sigma)\,,\,\,\,\, \theta \,|\, \mu \sim \mathbb{N}_p(\mu, \Omega),\,\,\, C\theta+d\leq 0,
\end{equation*}
where $C$ is a  $k \times p$ full rank matrix and $d\in \mathbb{R}^k$. They showed that the marginal density of $X$ is given by
\begin{equation*}
\frac{1}{\Phi_k(C\mu+d,\, C\Omega C^T;\,0)}\,\phi_p(x, \mu+\Sigma+\Omega)\,\Phi_k(C \Delta (\Sigma^{-1}x+\Omega^{-1}\mu)+d, C \Delta C^T;\,0),
\end{equation*}
where $\Delta^{-1}=(\Sigma^{-1}+\Omega^{-1})$, and $\Phi_p(a, C; \,0)$ is cdf of a $p$--variate normal density with mean vector $a$ and covariance matrix $C$.

Here, we use a hierarchical normal model with unknown covariance matrix in the presence of prior information on parameter spaces to obtain a class of skew--Student $t$ densities. In contrast to the Liseo and Loperdido's method which is based on marginal densities, we use another approach founded on posterior predictive density estimators to construct skew--Student $t$ distributions.
In fact, previous studies ( see, e.g. Corcuera and Giummol\`e 1999), indicates that under Kullback--Leibler (KL) loss, as defined in below
\begin{align}
L_{KL}(\theta, \hat{q}(\cdot; \, x_1, x_2))=\int_{\mathbb{R}^p} q_{\theta}(y_1)\, \log \frac{q_{\theta}(y_1)}{\hat{q}(y_1; x_1, x_2)}\, dy,\label{KL}
\end{align}
the Bayes predictive density estimator can be obtained by
\begin{equation}
\nonumber
\hat{q}_{\pi_{A}}(y_1;x_1, x_2) =  \int_{\mathbb{R}^p}  \phi(\frac{y_1-\theta_1}{\sigma_Y})   \,  \pi(\theta \mid x_1, x_2) \, d\theta\,,
\end{equation}
in other words, under the KL loss, the Bayes density coincides with the posterior predictive densities.

The rest of the paper is organized as follows: In Section \ref{setup} we introduce useful lemmas and sketch the problem set--up. Section \ref{pde} involves Bayes predictive densities and their interesting representations as skew--Student $t$ distributions, as well as the KL risk performances. Section \ref{examples} contains two different medical problems and we try to illustrate our methodology in finding the predictive density estimators, while some concluding remarks will appear in Section \ref{conclude}.
\section{Problem set--up}\label{setup}
Consider the following (canonical) normal model
\begin{equation}\label{modelu}
X_i \sim \mathbb{N}_p(\theta_i, \sigma^2 I_p),\, i=1, 2,\hspace{.3cm} Y_1 \sim \mathbb{N}_p(\theta_1, \sigma^2 I_p), \mbox{and} \hspace{.3cm} S^2\sim \sigma^2\, \chi^2_k,\,\mbox{are independent},
\end{equation}
with $k\geq 2$, $\theta_1 \in \mathbb{R}^p$, $\theta_2\in \mathbb{R}^p$, $\sigma^2\in \mathbb{R}_+$, and $\theta_1-\theta_2\in A \subseteq  \mathbb{R}^p$.
We consider that $\theta=(\theta_1, \theta_2)$ and $\sigma^2$ are unknown, and the objective is to obtain a predictive density estimate for $Y_1$.

\vspace{.05cm}
It can be verified that the joint density of ($X, S^2)$ in model (\ref{modelu}), supported on $\mathbb{R}^{2p} \times \mathbb{R}_+$, is given by 
\begin{equation}\label{jointu}
 p_{\theta, \sigma^2}(x, s^2) =\frac{(s^2)^{k/2-1}}{(2\pi \sigma^2)^p}\,\frac{\exp \left\{-\frac{1}{2\sigma^2}\left(\norm{x_1-\theta_1}+\norm{x_2-\theta_2}+s^2\right)\right\}}{(2\sigma^2)^{k/2}\Gamma(k/2)}\,.
\end{equation}
We assume the prior density
\begin{equation}\label{prioru}
\pi_{A}(\theta, \sigma^2)=\frac{1}{\sigma^2}\, \mathbb{I}_{A}(\theta_1- \theta_2),
\end{equation}
on the restricted parameter space $\theta_1-\theta_2 \in A$, where $A$ is a subset of $\mathbb{R}^p$
\begin{lemma}
\label{pos}
For model (\ref{modelu}), and prior in (\ref{prioru}), we have
\begin{enumerate}[{\bf(a)}]
\item  the posterior density $\pi(\theta \,|\, x, s^2)$ is proportional to
\begin{align}    
\left(1+\frac{\norm{x_2-\theta_2}}{s^2+\norm{x_1-\theta_1}}\right)^{-(p+k/2)}\!\!\!\left(1+\frac{\norm{x_1-\theta_1}}{s^2}\right)^{-(p+k/2)},
\end{align}
\item  the marginal posterior density is given by
\begin{equation*}
\pi(\theta_1\,|\, x, s^2) \propto \mathbb{T}_p(\nu=k,\, \xi=x_1,\tau=\frac{s}{\sqrt{k}})\,\mathbb{P}(V \in A)\,,
\end{equation*}
where 
 \begin{align}
 \label{W'}
 V\sim \mathbb{T}_p\left(\nu=k+p, \xi=\theta_1-x_2, \tau=\sqrt{\frac{s^2+\norm{x_1-\theta_1}}{p+k}}\right)\,.
 \end{align}
\end{enumerate}
\end{lemma}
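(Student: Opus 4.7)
The plan is a two-step Bayesian calculation: multiply the likelihood \eqref{jointu} by the prior \eqref{prioru} to get the joint posterior on $(\theta,\sigma^2)$, integrate out the nuisance $\sigma^2$ to obtain part (a), then marginalize out $\theta_2$ subject to $\theta_1-\theta_2\in A$ to obtain part (b). For part (a), the joint posterior is proportional to $\sigma^{-(2p+k+2)}\exp\bigl\{-\tfrac{1}{2\sigma^2}(\norm{x_1-\theta_1}+\norm{x_2-\theta_2}+s^2)\bigr\}\mathbb{I}_A(\theta_1-\theta_2)$; I would integrate over $\sigma^2\in(0,\infty)$ via the substitution $u=1/(2\sigma^2)$, which reduces it to a standard Gamma integral and yields a factor $\propto(s^2+\norm{x_1-\theta_1}+\norm{x_2-\theta_2})^{-(p+k/2)}$. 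The claimed factorisation then follows from
\[
s^2+\norm{x_1-\theta_1}+\norm{x_2-\theta_2}=s^2\Bigl(1+\tfrac{\norm{x_1-\theta_1}}{s^2}\Bigr)\Bigl(1+\tfrac{\norm{x_2-\theta_2}}{s^2+\norm{x_1-\theta_1}}\Bigr),
\]
with the $\theta$-free power of $s^2$ absorbed into the proportionality constant.

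For part (b), I would marginalise out $\theta_2$ by the change of variable $w=\theta_1-\theta_2$; the constraint becomes $w\in A$ and the $w$-dependent factor is $\bigl(1+\norm{w-(\theta_1-x_2)}/(s^2+\norm{x_1-\theta_1})\bigr)^{-(p+k/2)}$. Comparison with \eqref{tls} shows this is exactly the $w$-kernel of the $\mathbb{T}_p$ distribution of $V$ appearing in \eqref{W'}, since with $\nu=k+p$ and $\tau^2=(s^2+\norm{x_1-\theta_1})/(p+k)$ one has $\nu\tau^2=s^2+\norm{x_1-\theta_1}$ and $(\nu+p)/2=p+k/2$. Dividing and multiplying by the normaliser of $V$ rewrites $\int_A(\cdots)\,dw$ as $\mathbb{P}(V\in A)$ times a reciprocal-normaliser contribution proportional to $\tau^p$, i.e.\ to $(s^2+\norm{x_1-\theta_1})^{p/2}$. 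Multiplying by the leftover $\bigl(1+\norm{x_1-\theta_1}/s^2\bigr)^{-(p+k/2)}\propto (s^2+\norm{x_1-\theta_1})^{-(p+k/2)}$ factor from part (a) and using $-(p+k/2)+p/2=-(p+k)/2$ leaves the $\theta_1$-kernel $(s^2+\norm{x_1-\theta_1})^{-(p+k)/2}$, which up to $\theta_1$-independent constants is precisely the kernel of $\mathbb{T}_p(k,x_1,s/\sqrt{k})$.

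The only delicate point I expect is this bookkeeping in part (b): the Student $t$ normaliser of $V$ contains a $\tau^p$ that itself depends on $\theta_1$ through $\tau^2=(s^2+\norm{x_1-\theta_1})/(p+k)$, so one must carefully combine it with the leftover factor from part (a) to land on the correct degrees of freedom $\nu=k$ in the final $\mathbb{T}_p(k,x_1,s/\sqrt{k})$ rather than on a neighbouring parameterisation. Once the exponent identity $-(p+k/2)+p/2=-(p+k)/2$ is tracked correctly, the rest of the argument is routine.
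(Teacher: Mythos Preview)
Your proposal is correct and follows essentially the same route as the paper's proof: integrate out $\sigma^2$ via a Gamma substitution to get $t^{-(p+k/2)}$, factor algebraically for part (a), then change variables $w=\theta_1-\theta_2$ and recognise the Student $t$ kernel of $V$ for part (b). In fact you are more careful than the paper about the $\tau^p$ normaliser of $V$ in part (b)---the paper's proof stops at the factored integral without explicitly reconciling the exponent $-(p+k/2)$ on the $\theta_1$-factor with the $-(k+p)/2$ required for $\mathbb{T}_p(k,x_1,s/\sqrt{k})$, whereas your exponent identity $-(p+k/2)+p/2=-(p+k)/2$ makes this step explicit.
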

\begin{proof}
See Appendix A.1. 
\end{proof}
For example,
for $p=1$ and $A=[0, \infty)$, the probability as defined in Lemma \ref{pos} is equivalent to
\begin{equation*}
F_{1}\left(k+1, \,\frac{\theta_1-x_2}{\sqrt{\frac{s^2+\norm{x_1-\theta_1}}{p+k}}}\right),
\end{equation*}
where $F_{1}(\nu, \,\cdot)$ is cdf of a standard Student $t$ distribution with degrees of freedom $\nu>1$.

Next lemma was introduced by Aitchison (1975), gives posterior predictive density, with respect to $\pi_0(\theta_1, \, \sigma^2 )=\frac{1}{\sigma^2}$ and based on $(X_1, S^2)$. For the first
step, we need the definition of the scale inverse chi squared density. A random variable $X$ is said to have a scale inverse chi squared density whenever for all degrees of freedom $\nu>0$ and scale parameter $\tau>0$, the pdf is
\begin{equation}\label{sich2}
\frac{(\tau^2\,\nu/2)^{\nu/2}}{\Gamma(\nu/2)}\, \frac{\exp[\frac{-\nu \tau^2}{2x}]}{1+\nu/2}, \, \, \, \, \, x \in \mathbb{R}\,,
\end{equation}
and we denoted by $SInv-\chi^2(\nu, \, \tau)$.   
\begin{lemma}\label{Vcanon}
For model (\ref{modelu}), the posterior predictive density of $Y_1$ (Bayes predictive density estimator) associated with the non--informative prior density $\pi_0(\theta_1, \, \sigma^2 )=\frac{1}{\sigma^2}$
$$\mathbb{T}_p\left(\nu=k, \xi=x_1, \tau=\sqrt{\frac{2 s_1^2}{k}}\right).$$
 \end{lemma}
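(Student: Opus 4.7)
The plan is to carry out a standard two-step Bayesian predictive calculation: first marginalize over $\theta_1$ to reduce the problem to a scale mixture of normals in $\sigma^2$, and then recognize the resulting mixture as a Student $t$ density. Since the prior $\pi_0$ involves only $\theta_1$ and $\sigma^2$, the predictive density for $Y_1$ depends on the data only through $(X_1, S^2)$, so I can ignore $X_2$ and $\theta_2$ throughout.

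First I would write the joint posterior of $(\theta_1, \sigma^2)$ given $(x_1, s^2)$, which by (\ref{jointu}) and the prior (\ref{prioru}) restricted to $(\theta_1, \sigma^2)$ is proportional to
\begin{equation*}
\frac{1}{\sigma^2}\,(2\pi\sigma^2)^{-p/2}\exp\!\left(-\tfrac{1}{2\sigma^2}\|x_1-\theta_1\|^2\right)\,(s^2)^{k/2-1}(2\sigma^2)^{-k/2}\exp\!\left(-\tfrac{s^2}{2\sigma^2}\right).
\end{equation*}
From this factorization I read off that $\theta_1\mid\sigma^2,x_1,s^2 \sim \mathbb{N}_p(x_1,\sigma^2 I_p)$, and after integrating out $\theta_1$ (the normal integral is independent of $\sigma^2$ beyond the usual $\sigma^{-p}$ factor cancelling the normalizing constant), the marginal posterior of $\sigma^2$ is a scale inverse chi squared distribution, specifically $\sigma^2\mid x_1,s^2 \sim SInv\text{-}\chi^2(k,\,s/\sqrt{k})$ in the notation of (\ref{sich2}).

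Next, I would compute the conditional predictive density of $Y_1$ given $\sigma^2$ by convolving the sampling density $Y_1\mid\theta_1,\sigma^2\sim\mathbb{N}_p(\theta_1,\sigma^2 I_p)$ with the conditional posterior $\theta_1\mid\sigma^2,x_1\sim\mathbb{N}_p(x_1,\sigma^2 I_p)$. The two independent Gaussian components have variances that add, yielding
\begin{equation*}
Y_1 \mid \sigma^2, x_1, s^2 \;\sim\; \mathbb{N}_p(x_1,\,2\sigma^2 I_p).
\end{equation*}
Finally I would integrate out $\sigma^2$ against the $SInv\text{-}\chi^2(k,s/\sqrt{k})$ density. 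Using the standard identity that a normal scale mixture with a scaled inverse chi squared weight is a multivariate Student $t$, namely if $W\mid\sigma^2\sim \mathbb{N}_p(\mu,c\sigma^2 I_p)$ and $\sigma^2\sim SInv\text{-}\chi^2(\nu,\tau)$ then $W\sim \mathbb{T}_p(\nu,\mu,\tau\sqrt{c})$, I plug in $\mu=x_1$, $c=2$, $\nu=k$, $\tau=s/\sqrt{k}$ to obtain $Y_1\sim \mathbb{T}_p(k,\,x_1,\,\sqrt{2s^2/k})$, which is exactly the claim.

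The computation is almost entirely routine; the only point requiring a bit of care is the doubling of the variance when passing from $Y_1\mid\theta_1$ to $Y_1\mid\sigma^2$, which is what produces the $\sqrt{2}$ in the stated scale $\tau=\sqrt{2s^2/k}$ and distinguishes a predictive density from a posterior on the mean. Everything else is either the standard normal/normal conjugate update or the standard normal/inverse-chi-squared mixture representation of the Student $t$, so I expect no real obstacle beyond bookkeeping of the constants.
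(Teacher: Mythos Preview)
Your proposal is correct and follows essentially the same route as the paper: both derive the posterior $\sigma^2\mid x_1,s^2\sim SInv\text{-}\chi^2(k,s/\sqrt{k})$, form the predictive $Y_1\mid\sigma^2,x_1\sim\mathbb{N}_p(x_1,2\sigma^2 I_p)$ by integrating out $\theta_1$, and then mix over $\sigma^2$. The only cosmetic difference is that the paper evaluates the final $\sigma^2$-integral directly via the substitution $z=t''/(2\sigma^2)$ to recognize the Student~$t$ kernel, whereas you invoke the normal/scale-inverse-chi-squared mixture representation of~$\mathbb{T}_p$; these are the same computation.
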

\begin{proof}
For the non--informative prior $\pi_0(\theta_1, \sigma^2)=\frac{1}{\sigma^2}$ we have
\begin{align*}
\pi(\theta_1, \sigma^2\,|\, x_1,s^2)\propto (\sigma^2)^{-\frac{p+k}{2}-1} \exp \{\frac{-t'}{2\sigma^2}\},    
\end{align*}
where $t'=\norm{x_1-\theta_1}+s_1^2$.
This gives 
$\pi(\sigma^2 \,|\, x_1)\propto (\sigma^2)^{-(k/2+1)}\exp\{-\frac{s_1^2}{2\sigma^2}\}$. It is recognized as a kernel of scale inverse chi--square $SInv-\chi^2(k, \sqrt{\frac{s_1^2}{k}})$. (equation \ref{sich2}). Therefore we have
\begin{align*}
    q(y_1 ; x_1,s_1^2)_{\pi_0}&=\int_0^{\infty}q(y_1 \,|\, x_1, \sigma^2)\pi(\sigma^2 \,|\, x_1, s_1^2)\,d\sigma^2\\
    &\propto \int_0^{\infty}(\sigma^2)^{-p/2}\exp\{-\frac{\norm{y_1-x_1}}{2 \sigma^2}\}(\sigma^2)^{-k/2-1)}\,\exp\{-\frac{s_1^2}{2\sigma^2}\}\,d\sigma^2\\
  &\propto \int_0^{\infty}(\sigma^2)^{-\frac{p+k}{2}-1}\exp\left\{-\frac{1}{2\sigma^2}\left(\norm{y_1-X_1}+\frac{s_1^2}{2}\right)\right\}\,d\sigma^2\\
   &\propto t''^{-\frac{p+k}{2}}\int_0^{\infty} z^{\frac{p+k}{2}-1}\exp\{-z\}\, dz, \, \mbox{ with\,}\, t''=\norm{y_1-x_1}+\frac{s_1^2}{2}\\
  &\propto \left(1+\frac{2k\norm{y_1-x_1}}{ k\,s_1^2}\right)^{-(\frac{p+k}{2})}.
\end{align*}
This is the kernel of $\mathbb{T}_p( \nu=k, \xi=x_1, \tau=\sqrt{\frac{2 s_1^2}{k}})$ and hence the proof.
\end{proof}
\section{Predictive densities and their representations}\label{pde}
Given the prior $\pi_{A}(\theta, \sigma^2  )$ in (\ref{prioru}), the posterior predictive density is given by 
\begin{equation*}
\hat{q}_{\pi, A}(y_1;\, x, s)\,=\int_{\mathbb{R}^p_+}\int_{\mathbb{R}^p} q_{\theta_1, \sigma^2}(y_1)\, \pi_{A}(\theta, \sigma^2  \,|\, x, s^2)\, d\theta_1\,d\sigma^2\,,
\end{equation*}
where
\begin{align}
 \pi_{A}(\theta_1, \sigma^2 \,|\, x, s^2) &\propto e^{-\frac{s^2}{2 \sigma^2}} (\sigma^2)^{-(\frac{p+k}{2}+1)} 
\phi(\frac{\theta_1-x_1}{\sigma})\!\!\!\int\limits_{\{\theta_2; \theta_1-\theta_2 \in A\}}\!\!\!\! (\sigma^2)^{-\frac{p}{2}}\,\phi(\frac{\theta_2-x_2}{\sigma})\, d\theta_2.\nonumber\\
 &\propto e^{-\frac{s^2}{2 \sigma^2}} (\sigma^2)^{-(\frac{p+k}{2}+1)}\,\mathbb{P}(W \in A),\label{w}
\end{align} 
and $W\sim \mathbb{N}_p(\theta_1-x_2,\, \sigma^2\, I_p)$. 
\begin{lemma}\label{joint}
For model (\ref{modelu}), uniform prior (\ref{prioru}) on $A=\mathbb{R}_+^p$, by setting $\eta=\frac{1}{\sigma^2}$, we have
\begin{align}
\label{1}
\theta_1 \,|\,\eta, x, s^2 &\sim \mathbb{SN}_p\left(
\alpha_0=(x_1-x_2)\sqrt{\frac{\eta}{2}}, \alpha_1=1, \xi=1, \tau=\frac{1}{\sqrt{\eta}}\right)\,,
\end{align} and,
\begin{align}\label{2}
 \eta \,|\, x, s^2&\sim \pi_{U, A}(\eta \,|\, x, s^2)=\frac{\eta^{k/2-1}e^{-s^2\eta/2}}{\Gamma{(\frac{k}{2})}(\frac{2}{s^2})^{k/2}}\, \frac{\Phi_p\left((x_1-x_2)\sqrt{\frac{\eta}{2}}; 0\right)}{F_p\left(k, \frac{x_1-x_2}{\sqrt{2 s^2/k}}\right)}\,,
\end{align}
where $\mathbb{SN}_p$ is defined in (\ref{sn'}) and $F_{p}(\nu, \,\cdot)$, is cdf of a standard $p$--variate Student $t$ distribution with degrees of freedom $\nu$.
\end{lemma}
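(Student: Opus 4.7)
The plan is to work directly from the joint posterior $\pi_A(\theta_1,\sigma^2\mid x,s^2)$ in (\ref{w}) and split it into the marginal of $\eta=1/\sigma^2$ given $(x,s^2)$ times the conditional of $\theta_1$ given $(\eta,x,s^2)$. When $A=\mathbb{R}_+^p$, the probability $\mathbb{P}(W\in A)$ with $W\sim\mathbb{N}_p(\theta_1-x_2,\sigma^2 I_p)$ factors coordinatewise,
\begin{equation*}
\mathbb{P}(W\in A)=\prod_{i=1}^{p}\Phi\!\left(\frac{\theta_{1,i}-x_{2,i}}{\sigma}\right)=\Phi_p\!\left(\frac{\theta_1-x_2}{\sigma};\,0\right).
\end{equation*}
Substituting this into (\ref{w}) and changing variable via $\eta=1/\sigma^2$ (whose Jacobian contributes $\eta^{-2}$), the joint posterior on $(\theta_1,\eta)$ is proportional to
\begin{equation*}
e^{-s^2\eta/2}\,\eta^{(p+k)/2-1}\,\exp\!\left(-\tfrac{\eta}{2}\|\theta_1-x_1\|^2\right)\Phi_p\!\left(\sqrt{\eta}(\theta_1-x_2);\,0\right).
\end{equation*}

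To obtain (\ref{2}) I would integrate out $\theta_1$ via $u=\sqrt{\eta}(\theta_1-x_1)$, reducing the integral to a constant multiple of $\int\phi_p(u)\Phi_p(u+\sqrt{\eta}(x_1-x_2);0)\,du$. The key identity, with a clean probabilistic proof, is
\begin{equation*}
\int_{\mathbb{R}^p}\phi_p(u)\Phi_p(u+a;0)\,du=\mathbb{P}(V-U\leq a)=\Phi_p(a/\sqrt{2};\,0),
\end{equation*}
where $U,V$ are independent $\mathbb{N}_p(0,I_p)$ so that $V-U\sim\mathbb{N}_p(0,2I_p)$. Applying this with $a=\sqrt{\eta}(x_1-x_2)$ collapses the $\theta_1$-integral to $\Phi_p((x_1-x_2)\sqrt{\eta/2};0)$ and, combined with the Gamma kernel $\eta^{k/2-1}e^{-s^2\eta/2}$, delivers exactly the numerator of $\pi_{U,A}$.

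The normalizing constant then comes from the standard scale-mixture representation of the multivariate Student~$t$: if $Z\sim\mathbb{N}_p(0,I_p)$ is independent of $W\sim\chi^2_k$, then $Z/\sqrt{W/k}$ has cdf $F_p(k,\cdot)$. Writing $W=s^2\eta$ makes $\eta\sim\text{Gamma}(k/2,s^2/2)$ correspond to $W\sim\chi^2_k$, and conditioning on $W$ gives
\begin{equation*}
\mathbb{E}\!\left[\Phi_p\!\left((x_1-x_2)\sqrt{\eta/2};0\right)\right]=F_p\!\left(k,\,\frac{x_1-x_2}{\sqrt{2s^2/k}}\right),
\end{equation*}
which together with the Gamma constant $\Gamma(k/2)(2/s^2)^{k/2}$ yields the denominator in (\ref{2}).

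Finally, (\ref{1}) follows by dividing the joint posterior by this marginal; the cancellation leaves
\begin{equation*}
\pi(\theta_1\mid\eta,x,s^2)\propto\eta^{p/2}\exp\!\left(-\tfrac{\eta}{2}\|\theta_1-x_1\|^2\right)\frac{\Phi_p(\sqrt{\eta}(\theta_1-x_2);0)}{\Phi_p((x_1-x_2)\sqrt{\eta/2};0)},
\end{equation*}
which matches the $\mathbb{SN}_p$ template in (\ref{sn'}) upon choosing $\xi=x_1$, $\tau=1/\sqrt{\eta}$, $\alpha_1=1$ and $\alpha_0=\sqrt{\eta}(x_1-x_2)$: with these values the skew-normal normalizer argument $\alpha_0/\sqrt{1+\alpha_1^T\alpha_1}$ equals $(x_1-x_2)\sqrt{\eta/2}$, aligning with the denominator above. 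The main obstacle is simply the bookkeeping of the $\sqrt{2}$ factor introduced by the convolution of two independent standard normals and its reconciliation with the normalizer in definition (\ref{sn'}); once that is done, the identification of parameters is routine.
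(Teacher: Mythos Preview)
Your argument is correct and follows the same route as the paper: specialize $\mathbb{P}(W\in A)$ in (\ref{w}) to $\Phi_p(\sqrt{\eta}(\theta_1-x_2);0)$, recognize the $\mathbb{SN}_p$ form for $\theta_1\mid\eta$, and normalize the $\eta$-marginal via the Gamma--mixture identity the paper records as (\ref{iden}); you merely supply self-contained probabilistic proofs of the convolution step $\int\phi_p(u)\Phi_p(u+a;0)\,du=\Phi_p(a/\sqrt{2};0)$ and of (\ref{iden}) where the paper quotes Azzalini and Capitanio. Incidentally, your identifications $\xi=x_1$ and $\alpha_0=\sqrt{\eta}(x_1-x_2)$ are exactly what the paper's own computation uses and what make the template (\ref{sn'}) match the conditional density; the values $\xi=1$ and $\alpha_0=(x_1-x_2)\sqrt{\eta/2}$ printed in the lemma's display are typos.
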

\begin{proof}
The probability in (\ref{w}) can be replaced by $\Phi_p(\frac{\theta_1-x_2}{\sigma}; 0)$. Thus
\begin{align*}
\pi_{U, A}(\theta_1, \sigma^2 \,|\, x, s^2)&\propto
\frac{e^{\frac{-s^2}{2\sigma^2}}}{(\sigma^2)^{(k/2+1)}}(\frac{1}{\sigma^2})^{\frac{p}{2}}\phi_p(\frac{\theta_1-x_1}{\sigma}\, )\Phi_p(\frac{\theta_1-x_1}{\sigma}; 0)\\
&\propto \Phi_p(\frac{x_1-x_2}{\sqrt{2}\tau}; 0)\frac{e^{\frac{-s^2}{2\sigma^2}}}{(\sigma^2)^{(k/2+1)}} \frac{(\frac{1}{\tau})^{p}\phi_p(\frac{\theta_1-\xi}{\tau})\Phi_p(\alpha_0+\alpha_1\frac{\theta_1-\xi}{\tau}; 0)}{\Phi_p(\frac{\alpha_0}{\sqrt{1+\alpha_1^2}}; 0)},
\end{align*}
Changing the variable $\eta=1/\sigma^2$ proves (\ref{1}).\\
In addition, we can write
\begin{align}\label{here}
    \pi_{U, A}(\eta \,|\, x, s^2) \propto \frac{\eta^{k/2-1}e^{-s^2\eta/2}}{\Gamma(\frac{k}{2})(\frac{2}{s^2})^{k/2}}\, \Phi_p\left((x_1-x_2)\sqrt{\frac{\eta}{2}}; 0\right).
\end{align}
Now, using the identity provided by Azzalini and Capitanio (2003) as follows
\begin{equation}\label{iden}
\mathbb{E}[\Phi_p(c\,\sqrt{\eta}; 0)]=F_{p}\left(2a,\, c\,\sqrt{\frac{a}{b}}\right),\,
\mbox{for \,} \,\eta \sim Gamma(a, b),
\, a>0, \,b>0, \,c>0\,,
\end{equation}
and choosing $a=k/2$, $b=s^2/2$ in it, (\ref{here}) can be written as
\begin{equation*}
  \pi_{U, A}(\eta \,|\, x, s^2) = \frac{\eta^{k/2-1}e^{-s^2\eta/2}}{\Gamma(\frac{k}{2})(\frac{2}{s})^{k/2}}\, \frac{\Phi_p\left((x_1-x_2)\sqrt{\frac{\eta}{2}}; 0\right)}{F_p\left(k,\, \frac{x_1-x_2}{\sqrt{\frac{2s^2}{k}}}\right)}.  
\end{equation*}
This completes the proof of (\ref{2}).
\end{proof}

\begin{theorem}
\label{adgeneral1}
For model (\ref{modelu}), the Bayes predictive density estimator based on additional prior $\hat{q}_{\pi, A}(y_1; x, s^2)$ associated with a uniform prior (\ref{prioru}) on $A=\mathbb{R}_+^P$, is given by
\begin{equation*}
\mathbb{ST}_p\left(\nu=k,  \alpha_0=\sqrt{\frac{2}{3}}\frac{x_1-x_2}{\sqrt{2 s^2/k}}, \alpha_1=1/\sqrt{3}, \,\xi=x_1,\,\tau=\sqrt{\frac{2s^2}{k}} \right),  
\end{equation*}
where $\mathbb{ST}_p$ is skew--Student $t$ distribution, as defined in Definition \ref{ST1}. Equivalently, we can write
\begin{equation}
\label{pde2} \mathbb{T}_p\left(\nu=k, \xi=x_1, \tau=\sqrt{\frac{2s^2}{k}}\right)\frac{F_{p}\left(k+p,\,\left(\sqrt{\frac{2}{3}}(x_1-x_2)+\frac{y_1-x_1}{\sqrt{3}}\right)\sqrt{\frac{k+1}{2s^2+(y_1-x_1)^2}}\right)}{F_p(k,\,\frac{x_1-x_2}{\sqrt{2s^2/k}})} \,,
\end{equation}

\end{theorem}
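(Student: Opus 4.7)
The plan is to compute the Bayes predictive density integral in two stages: first integrate out $\theta_1$ via Gaussian convolution, then integrate out $\sigma^2$ (equivalently $\eta=1/\sigma^2$) using the Azzalini--Capitanio identity (\ref{iden}) already used in the proof of Lemma~\ref{joint}.

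Using (\ref{w}) together with the fact that $\mathbb{P}(W\in\mathbb{R}_+^p)=\Phi_p((\theta_1-x_2)/\sigma;0)$ for $W\sim\mathbb{N}_p(\theta_1-x_2,\sigma^2 I_p)$, I will start from
\[
\hat{q}_{\pi,A}(y_1;x,s^2)\propto\int_0^\infty\!\!\int_{\mathbb{R}^p}\!(\sigma^2)^{-\frac{p}{2}-\frac{p+k}{2}-1}e^{-\left(\|y_1-\theta_1\|^2+\|x_1-\theta_1\|^2+s^2\right)/(2\sigma^2)}\Phi_p\!\left(\tfrac{\theta_1-x_2}{\sigma};0\right)d\theta_1\,d\sigma^2.
\]
The square-completion $\|y_1-\theta_1\|^2+\|x_1-\theta_1\|^2 = 2\|\theta_1-(y_1+x_1)/2\|^2+\|y_1-x_1\|^2/2$ factorises the $\theta_1$-kernel as a Gaussian in $\theta_1$ (mean $(y_1+x_1)/2$, covariance $\tfrac{\sigma^2}{2}I_p$) times the $y_1$-dependent factor $e^{-\|y_1-x_1\|^2/(4\sigma^2)}$. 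Substituting $u=\sqrt{2}(\theta_1-(y_1+x_1)/2)/\sigma$ and invoking the standard identity $\int\phi_p(u)\Phi_p(Bu+c;0)\,du=\Phi_p(c;I_p+BB^T)$ with $B=I_p/\sqrt{2}$ (so that $I_p+BB^T=\tfrac{3}{2}I_p$) collapses the inner integral to a multiple of $\Phi_p\!\bigl(\sqrt{2/3}\,((y_1+x_1)/2-x_2)/\sigma;\,0\bigr)$.

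Changing variable to $\eta=1/\sigma^2$, the remaining one-dimensional integral takes the form $\int_0^\infty \eta^{(p+k)/2-1}e^{-b\eta}\Phi_p(c\sqrt{\eta};0)\,d\eta$ with $b=(2s^2+\|y_1-x_1\|^2)/4$ and $c=\sqrt{2/3}\bigl((y_1+x_1)/2-x_2\bigr)$. Applying the Azzalini--Capitanio identity (\ref{iden}) with shape $a=(p+k)/2$ and rate $b$ evaluates this to $\Gamma(a)\,b^{-a}\,F_p(k+p,\,c\sqrt{a/b})$. The factor $b^{-a}\propto(2s^2+\|y_1-x_1\|^2)^{-(p+k)/2}$ is precisely the $y_1$-dependent kernel of $\mathbb{T}_p(\nu=k,\xi=x_1,\tau=\sqrt{2s^2/k})$, while $F_p(k+p,\,c\sqrt{a/b})$ supplies the skewing cdf demanded by Definition~\ref{ST1}.

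What remains is algebraic: rewriting $(y_1+x_1)/2-x_2=(x_1-x_2)+(y_1-x_1)/2$ and setting $z=(y_1-x_1)/\sqrt{2s^2/k}$, I will match $c\sqrt{a/b}$ to the canonical form $(\alpha_0+\alpha_1^T z)\sqrt{(\nu+p)/(\nu+z^Tz)}$ of (\ref{STls}), reading off the pair $(\alpha_0,\alpha_1)$ in the theorem. The normalising denominator $F_p(k,(x_1-x_2)/\sqrt{2s^2/k})$ is recovered either by enforcing $\int\hat q_{\pi,A}\,dy_1=1$, or more cleanly by invoking the marginal $\pi_{U,A}(\eta\mid x,s^2)$ of Lemma~\ref{joint}(b) and applying (\ref{iden}) once more to its normaliser. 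The main obstacle will be the careful bookkeeping of proportionality constants across both integrations and the precise simplification of the $F_p$ argument into the claimed $(\alpha_0,\alpha_1)$ form; the two integration steps themselves are routine once the reductions above are in place.
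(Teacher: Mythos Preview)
Your proposal is correct and follows essentially the same route as the paper's proof: a Gaussian convolution over $\theta_1$ (the paper does this after the substitution $U=\sqrt{\eta}(\theta_1-x_1)$ and via the skew--normal/weighted--gamma decomposition of Lemma~\ref{joint}, while you complete the square directly), followed by the Azzalini--Capitanio identity (\ref{iden}) for the remaining $\eta$--integral. The only organizational difference is that the paper carries the normalizing constant $F_p(k,(x_1-x_2)/\sqrt{2s^2/k})$ from Lemma~\ref{joint} throughout, whereas you work up to proportionality and recover it at the end; both are valid and lead to the same result.
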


\begin{proof}
See Appendix A.2.
\end{proof}
\begin{lemma}
\label{jointm}
For model (\ref{modelu}) and uniform prior (\ref{prioru}) on $A=[-m,m]^p$, the marginal posterior distribution $\pi(\theta_1 \,|\, \eta, x, s^2$) is given by
\begin{small}
 $$\mathbb{SN}_p\left(
 \alpha_0=(x_1-x_2+m)\sqrt{\frac{\eta}{2}}, \alpha_1=1, \alpha_
 2=(x_1-x_2-m)\sqrt{\frac{\eta}{2}},\xi=1, \tau=\frac{1}{\sqrt{\eta}}\right).$$
\end{small}
 and also,
 \begin{align*}
 \pi(\eta \,|\, x, s^2)=\frac{\eta^{k/2-1}e^{-s\eta/2}}{\Gamma{(\frac{k}{2})}(\frac{2}{s})^{k/2}}\, \frac{\Phi_p\left((x_1-x_2+m)\sqrt{\frac{\eta}{2}}; 0\right)-\Phi_p\left((x_1-x_2-m)\sqrt{\frac{\eta}{2}}; 0\right)}{F_p(k, \frac{x_1-x_2+m}{\sqrt{2s^2/k}})-F_p(k, \frac{x_1-x_2-m}{\sqrt{2s^2/k}})}.
\end{align*}
\end{lemma}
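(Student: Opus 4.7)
The plan is to mimic the proof of Lemma \ref{joint} verbatim, swapping in the hypercube $A = [-m,m]^p$ for the orthant $A = \mathbb{R}_+^p$. The entire change propagates through a single factor, the probability $\mathbb{P}(W \in A)$ with $W \sim \mathbb{N}_p(\theta_1 - x_2, \sigma^2 I_p)$ that appears in (\ref{w}); for the hypercube, using independence of the components,
\[
\mathbb{P}(W \in [-m,m]^p) = \Phi_p\!\left(\tfrac{\theta_1 - x_2 + m}{\sigma}; 0\right) - \Phi_p\!\left(\tfrac{\theta_1 - x_2 - m}{\sigma}; 0\right),
\]
in the paper's notation for the joint cdf of independent standard normals. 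Plugging this into (\ref{w}) and substituting $\eta = 1/\sigma^2$, the joint $(\theta_1,\eta)$ posterior is proportional to $\eta^{(p+k)/2 - 1} e^{-s^2 \eta/2}$ times $\phi_p\!\left((\theta_1-x_1)\sqrt{\eta}\right)$ times the same two-term $\Phi_p$ difference.

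Next I would read off the conditional of $\theta_1 \mid \eta$ by treating the $\eta$-dependent factors as its normalizing constant. The resulting density is the two-endpoint extension of the $\mathbb{SN}_p$ density (\ref{sn'}) — the $\Phi_p$ in the numerator is replaced by a difference of two $\Phi_p$'s — which is the skew-normal analogue of Definition \ref{st2}. Matching shape, location and scale parameters as in Lemma \ref{joint} yields $\alpha_0 = (x_1 - x_2 + m)\sqrt{\eta/2}$, $\alpha_2 = (x_1 - x_2 - m)\sqrt{\eta/2}$, $\alpha_1 = 1$, and scale $\tau = 1/\sqrt{\eta}$; the $\sqrt{2}$ inside the $\alpha$'s comes from the $\sqrt{1+\alpha_1^2}$ normalization baked into the $\mathbb{SN}_p$ parameterization, exactly as in the derivation of (\ref{1}).

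For the marginal of $\eta$ I would integrate $\theta_1$ out using the Gaussian convolution identity $\int \phi(u)\,\Phi(u + a)\,du = \Phi(a/\sqrt{2})$ componentwise, which produces
\[
\int \phi_p\!\left(\tfrac{\theta_1 - x_1}{\sigma}\right) \Phi_p\!\left(\tfrac{\theta_1 - x_2 \pm m}{\sigma}; 0\right) d\theta_1 \;=\; \sigma^p\, \Phi_p\!\left((x_1 - x_2 \pm m)\sqrt{\eta/2};\, 0\right).
\]
Absorbing the $\sigma^p$ with the $(\sigma^2)^{-(p+k)/2-1}$ factor and changing variables to $\eta$ leaves $\pi(\eta \mid x, s^2) \propto \eta^{k/2 - 1} e^{-s^2 \eta/2}\bigl[\Phi_p(\cdot + m;0) - \Phi_p(\cdot - m;0)\bigr]$, and the stated normalizing constant then follows by applying the Azzalini–Capitanio identity (\ref{iden}) with $a = k/2$, $b = s^2/2$ to each of the two $\Phi_p$ terms separately — just as in the proof of Lemma \ref{joint}. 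The main obstacle is purely bookkeeping: in dimension $p>1$ the event $\{W \in [-m,m]^p\}$ is genuinely a product of componentwise differences rather than a difference of joint cdfs, so the two-term $\Phi_p$ difference in the statement must be interpreted componentwise (as elsewhere in the paper), after which (\ref{iden}) applied term by term closes the argument with no new analytic content beyond Lemma \ref{joint}.
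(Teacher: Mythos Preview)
Your proposal is correct and follows exactly the paper's own approach: the paper's proof is a one-line reference back to Lemma \ref{joint}, replacing $\mathbb{P}(W\in A)$ by $\Phi_p\!\left(\frac{\theta_1-x_2+m}{\sigma};0\right)-\Phi_p\!\left(\frac{\theta_1-x_2-m}{\sigma};0\right)$, which is precisely what you do (with more detail supplied). Your remark that for $p>1$ the two-term difference must be read componentwise is a fair caveat about the paper's notation rather than a departure from its argument.
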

\begin{proof}
The proof may be easily derived from a similar analysis to Lemma (\ref{joint}) with the probability $\mathbb{P}(W \in A)=\Phi_p(\frac{\theta_1-x_2+m}{\sigma}; 0)-\Phi_p(\frac{\theta_1-x_2-m}{\sigma}; 0)$ in (\ref{w}).
\end{proof}

\begin{theorem}\label{pdem}
For model (\ref{modelu}), the Bayes predictive density estimator based on additional prior information $\hat{q}_{\pi, A}(y_1; x, s^2)$, $A=[-m,m]^p$ with $m>0$, and a uniform prior (\ref{prioru}), is given by
\begin{small}
\begin{equation}\label{pde1m}
\mathbb{ST}_p\left(\alpha_0=\sqrt{\frac{2}{3}}\frac{x_1-x_2+m}{\sqrt{2s^2/k}}, \alpha_1=\frac{1}{\sqrt{3}}, \alpha_2=\sqrt{\frac{2}{3}}\frac{x_1-x_2-m}{\sqrt{2s^2/k}} \,\xi=x_1,\,\tau=\sqrt{\frac{2s^2}{k}} \right).
\end{equation}
\end{small}
In other words,
\begin{align}\label{pde2m}  \mathbb{T}_p\left(\nu=k, \xi=x_1, \tau=\sqrt{\frac{2s^2}{k}}\right)
\frac{F_{p}\left(k+1, L_1(x, s^2)\right)-F_{p}\left(k+1, L_2(x, s^2)\right)}{F_p\left(1, \frac{x_1-x_2+m}{\sqrt{2s^2/k}}\right)-F_p\left(1, \frac{x_1-x_2-m}{\sqrt{2s^2/k}}\right)} \,,
\end{align}
where $L_1(x, s^2)=\sqrt{\frac{2}{3}}(x_1-x_2+m)+\frac{y_1-x_1}{\sqrt{3}}\sqrt{\frac{k+1}{2s^2+\norm{y_1-x_1}}}$, $L_2(x, s^2)=\sqrt{\frac{2}{3}}(x_1-x_2-m)+\frac{y_1-x_1}{\sqrt{3}}\sqrt{\frac{k+1}{2s^2+\norm{y_1-x_1}}}$.
\end{theorem}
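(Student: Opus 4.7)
The plan is to follow the same route as in the proof of Theorem \ref{adgeneral1}, exploiting the fact that the posterior description in Lemma \ref{jointm} differs from its $A=\mathbb{R}_+^p$ counterpart only by replacing each single $\Phi_p(\cdot)$ factor by a difference $\Phi_p(\cdot + m) - \Phi_p(\cdot - m)$, and, correspondingly, each $F_p(\cdot)$ by the analogous difference. Because the Bayes predictive density is linear in the posterior, the computation reduces to running the one-sided calculation of Appendix A.2 twice, once with the shift $+m$ and once with $-m$, and subtracting; the difference structure in (\ref{pde2m}) is then automatic.

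Concretely, I would start from
\begin{equation*}
\hat q_{\pi,A}(y_1;x,s^2) \;=\; \int_0^{\infty}\!\!\int_{\mathbb{R}^p} q_{\theta_1,\sigma^2}(y_1)\,\pi(\theta_1\,|\,\eta,x,s^2)\,\pi(\eta\,|\,x,s^2)\,d\theta_1\,d\eta,
\end{equation*}
with $\eta=1/\sigma^2$, inserting the two expressions provided by Lemma \ref{jointm}. The inner $\theta_1$-integral is a convolution of a Gaussian kernel in $y_1-\theta_1$ against a skew-normal density of the type in Definition \ref{st2}. The same completion of squares used in Theorem \ref{adgeneral1} yields, for each fixed $\eta$, a Gaussian in $y_1-x_1$ multiplied by the difference of two standard normal cdfs whose arguments are linear in $(x_1-x_2\pm m)$ and $y_1-x_1$; the two contributions are structurally identical to the one-sided case with different shifts.

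Next I would integrate over $\eta$. After collecting the $\eta$-dependent Gaussian factor with the Gamma-like weight from $\pi(\eta\,|\,x,s^2)$, the result is a $\mathrm{Gamma}(a,b)$ kernel with $a=(k+p)/2$ and $b$ depending on $2s^2+\|y_1-x_1\|^2$, precisely the regime in which the Azzalini–Capitanio identity (\ref{iden}) applies. Applying (\ref{iden}) separately to each of the two $\Phi_p$ terms converts them into Student $t$ cdfs, producing the difference $F_p(k+p,L_1)-F_p(k+p,L_2)$ in the numerator, while the $\eta$-independent prefactor assembles into the Student $t$ density $\mathbb{T}_p(k,x_1,\sqrt{2s^2/k})$, exactly as in Theorem \ref{adgeneral1}. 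The denominator in (\ref{pde2m}) comes directly from the normalizing constant of $\pi(\eta\,|\,x,s^2)$ made explicit in Lemma \ref{jointm}.

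The main obstacle is bookkeeping rather than substance: carefully tracking the two shifted arguments simultaneously through the completion of squares and verifying that the parameters $(\alpha_0,\alpha_1,\alpha_2,\xi,\tau)$ delivered by the calculation coincide with those stated in (\ref{pde1m}). Since the one-sided template of Appendix A.2 is already available, the cleanest presentation is probably to observe at the outset that $\mathbb{I}_{[-m,m]^p}(\cdot) = \mathbb{I}_{(-\infty,m]^p}(\cdot) - \mathbb{I}_{(-\infty,-m]^p}(\cdot)$ (up to boundary), deduce by symmetry/translation that each summand produces the expression of Theorem \ref{adgeneral1} with the appropriate shift, and invoke linearity to conclude the form (\ref{pde2m}), which then matches (\ref{pde1m}) via Definition \ref{st2}.
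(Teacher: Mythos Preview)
Your proposal is correct and follows precisely the route the paper intends: the paper's own proof consists of the single sentence ``The proof is straightforward and analogous to Theorem \ref{adgeneral1},'' and your plan spells out exactly that analogy---running the Appendix~A.2 computation with the $\Phi_p$ (and then $F_p$) factors replaced by the corresponding differences coming from Lemma~\ref{jointm}, and invoking identity~(\ref{iden}) term by term. Your write-up is in fact more detailed than the paper's, but the approach is the same.
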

\begin{proof}
The proof is straightforward and analogous to Theorem \ref{adgeneral1}.  
\end{proof}
\subsection{Risk performance}
It would be interesting to compare the frequentist risk performance of the Bayes predictive density estimator $\hat{q}_{\pi, A}$ (the Bayes estimator with considering additional information from Theorem \ref{adgeneral1} or \ref{pdem}, depending on $A$) and $q_{\pi_0}$ (the Bayes estimator without considering additional information from Lemma 2.2). For model (\ref{modelu}) the KL risk function is given by
\begin{align*}
R_{KL}(\theta, \hat{q}) \,=\,  \int_{\mathbb{R}^{p}}\!\int_{\mathbb{R}^{p}}  L_{KL}\left(\theta, \hat{q}(\cdot;x)\right) \,p_{\theta}(x_1, x_2) \, dx_1\,dx_2  \,.  
\end{align*}
where $p_{\theta}(x_1, x_2)=\frac{1}{\sigma_{X_1}\sigma_{X_2}}\phi(\frac{x_1-\theta_1}{\sigma_{X_1}})\phi(\frac{x_2-\theta_2}{\sigma_{X_2}})$.

Figure \ref{U0}, and \ref{U1} present the relative efficiency (risk ratio) of $\hat{q}_{\pi, A}$ over $\hat{q}_{\pi_0}$, for $p=1$, $k=3$ and $\Delta=(\theta_1-\theta_2)/\sigma$ based on restricted parameter $A=[0, \infty)$ and $A=[-6, 6]$ respectively. For both graphs, we have about $12\%$ improvement in the KL risk function.
\begin{figure}[H] 
    \centering 
    \includegraphics[width=0.5\textwidth]{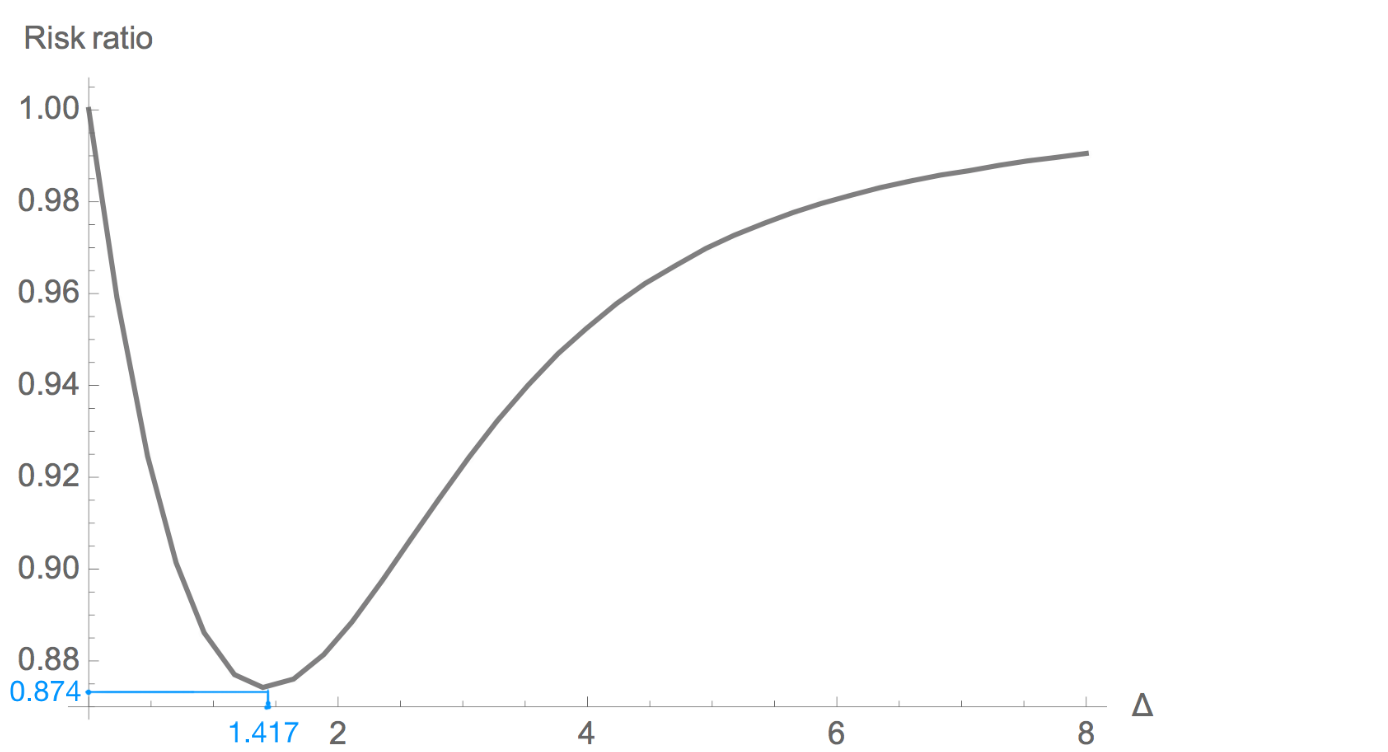}
    \caption{Risk ratio of the Bayes predictive density estimators for $p=1$, $k=3$ and $A=[0, +\infty)$.  }\label{U0}
\end{figure}
\begin{figure}[H] 
    \centering 
    \includegraphics[width=0.55\textwidth]{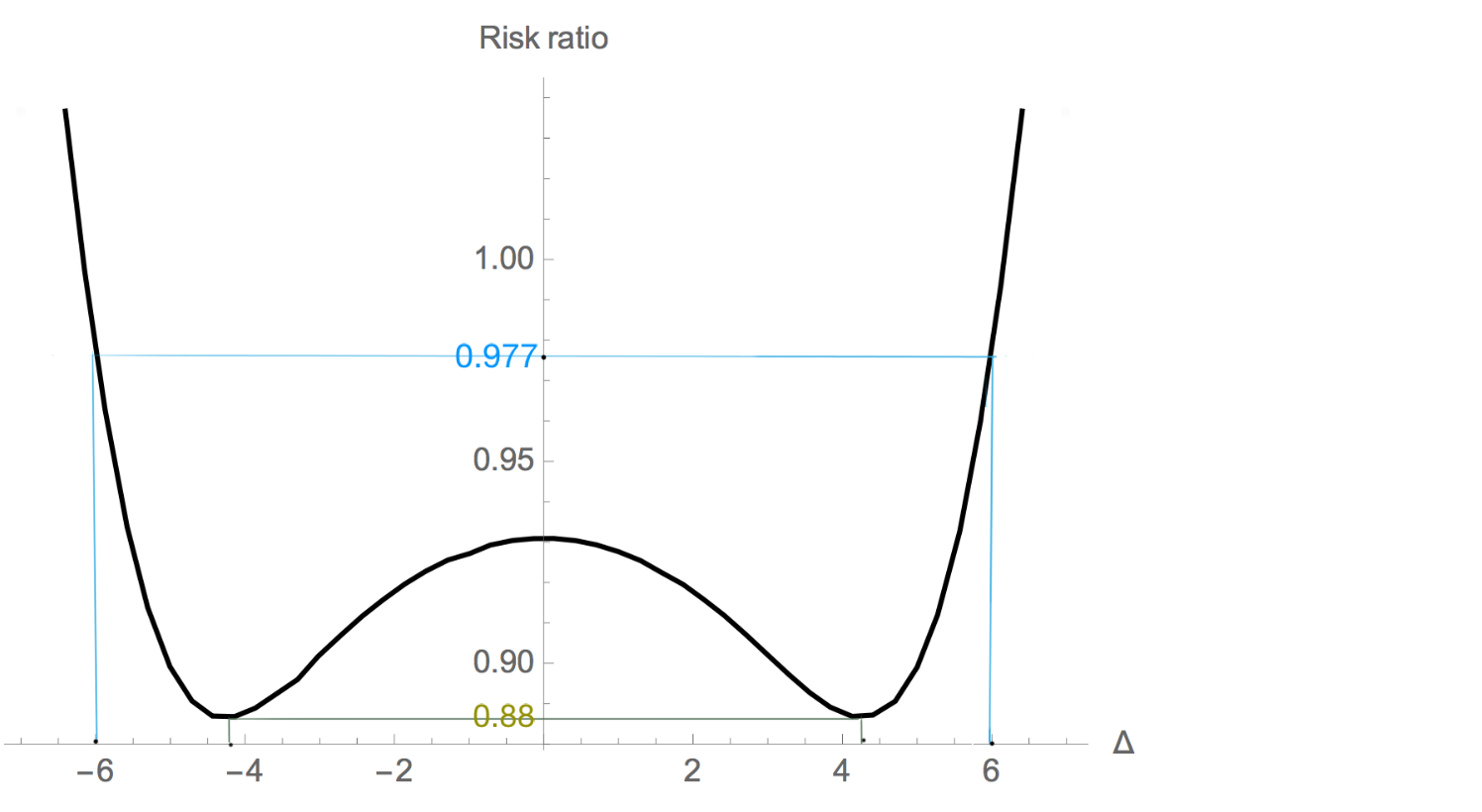}
    \caption{Risk ratio of the Bayes predictive density estimators for $p=1$, $k=3$ and $A=[-6, 6]$.  }\label{U1}
\end{figure}
\section{Examples}\label{examples}
In this section, we apply the proposed methods in order to construct Bayes predictive density estimators through two well-known medical data.\\

\textbf{\large{ Example 1 (Leg-length discrepancy predictive density estimation)}}

Leg--length Discrepancy (LLD), the difference between the lengths of two legs, is a topic that seemingly has been exhaustively examined. The LLD may be caused by trauma or mild developmental abnormalities, with onset in birth or childhood. In fact, it causes several conditions, including low back pain; osteoarthritis of the hip and knee; knee pain and running injuries, such as Achilles rupture.
Harvey et al. (2010) used radiography to evaluate leg length in 3,026 adults. After following participants for 30 months they conducted exploratory analyses to determine whether there was an important threshold value of the LLD above which knee osteoarthritis was more likely. They did this by stratifying the LLD into four categories: less than 0.5 cm (reference group), 0.5 cm to less than 1 cm, 1 cm to less than 2 cm, and 2 cm or more. Their result showed leg-length inequality of 1 cm or more to be associated with prevalent, incident, symptomatic and progressive knee osteoarthritis that was strongest in the shorter leg. \\
Table \ref{Baseline} shows the body mass of participants grouped with the LLD (defined as inequality of 1 cm or more).

\begin{table}[H]
\centering
\label{Baseline}
\begin{tabular}{|l|lll|}
\hline
Body mass index  &sample size \cellcolor[HTML]{C0C0C0} & mean\cellcolor[HTML]{C0C0C0}& sd  \cellcolor[HTML]{C0C0C0} \\  \hline LLD greater or equal than 1 cm (group 1)
\cellcolor[HTML]{C0C0C0} &429 &31 & 5.7    \\
\hline LLD less than 1 cm (group 2)
\cellcolor[HTML]{C0C0C0} & $2535$  & $30.4$& 5.7                          \\\hline                    
 \end{tabular}
\caption{Patient body mass}
\end{table}
Also it is statistically significant at the $0.05$ level the mean of body mass index in the group with LDD $\geq 1$ is greater than group with LDD $< 1$. Suppose random variable $X_1$, the body mass index with LDD $\geq1$, follows $\mathbb{N}(\theta_1, \sigma_1^2)$, is independent of $X_2$, the body mass index with LDD $<1$ which is distributed as $\mathbb{N}(\theta_2, \sigma_2^2)$, when their means are subject to the order restriction $\theta_1 \geq \theta_2$ and variances $\sigma_1^2$ and $\sigma_2^2$ are unknown. 

Table 2 contains the predictive density estimators $\hat{q}_{\pi, A}$, and the Bayes estimators $q_{\pi_0}$ (
predictive density estimators without and with considering the additional information respectively) along with their means, 10$^{th}$, 50$^{th}$ and 90$^{th}$ percentiles for the future density $Y_1$ of the body mass index of patients with LLD$\geq$ 1 cm, 
based on the data from Table 1.
 \begin{table}[H]
\centering
\resizebox{\columnwidth}{!}{
\label{PDEs1212}
\begin{tabular}{|l|l|l|l|}
\hline
\rowcolor[HTML]{C0C0C0} 
\small{Predictive Density estimation}& \small{Estimator} &\small{PDF}  &   $\bar{y}_1$, $P_{0.1}$, $P_{0.5}$, $P_{0.9}$\\ \hline
\small{Bayes without additional information}& $\mathbb{T}_1(n_1-1, \bar{x}_1, \sqrt{\frac{2s_1^2}{n_1-1}})$& $\mathbb{T}_1(428, 31, 0.39)$ & 31, 30.5, 31, 31.5 \\\hline
\small{Bayes with additional information} & $\mathbb{ST}_1(n_1-1, \frac{\bar{x}_1-\bar{x}_2}{\sqrt{3s_1^2/(n_1-1)}},\bar{x}_1,\sqrt{\frac{2s_1^2}{n_1-1}})$& $\mathbb{ST}_1(428, 1.26, 31,0.39)$ & 31.02, 30.52, 31.02, 31.52 \\\hline
\end{tabular}
}
\caption{Predictive density estimators of future density of the body mass index of patients with LLD$\geq$ 1 cm, along with their means, 10$^{th}$, 50$^{th}$ and 90$^{th}$ percentiles.}
\end{table}

\textbf{\large{ Example 2 (Child's first walk)}}

An experiment was conducted to evaluate the effect of exercise on the age at which a child starts to walk (see Silvapulle and Sen, 2005). Let $X$ denote the age (in months) at which a child starts to walk.
\begin{table}[H]
\centering
\label{TableWalking}
\begin{tabular}{|l|cccccc|l|}
\hline
Group 1\cellcolor[HTML]{C0C0C0} &  11& 10& 10& 11.75& 10.5& 15&  $\bar{x}_1=11.37$, $s_1=1.44$ \\ \hline
Group 2\cellcolor[HTML]{C0C0C0} &  9 & 9.5& 9.75 &10 &13 &9.5  & $\bar{x}_2=10.12$, $s_2=1.9$   \\ \hline
\end{tabular}
\caption{The age at which a child first walks}
\end{table}
The first group performed daily exercises but not the special walking exercises while the second group performed a special walking exercise for 12 minutes per day beginning at age 1 week and lasting 7 weeks. (the original experiment consists of other groups, however, here we consider only two of them.)\\
For groups $i (i = 1, 2)$ let, $\theta_i$ be the mean age (in months) at which a child starts to walk. However, suppose that the researcher was prepared to assume that the walking exercises would not have negative effect of increasing the mean age at which a child starts to walk, and it was desired that this additional information be incorporated to improve on the statistical analysis. In this case, we have that $\theta_1 \geq \theta_2$. 
this can be considered as two univariate normal distributions $X_1$, $X_2$, when their means are subject to the order restriction $\theta_1 \geq \theta_2$ and variances $\sigma_1^2$ and $\sigma_2^2$ are different and unknown. Analogous to example 1, Table 4 can be similarly obtained. 
 \begin{table}[H]
\centering
\resizebox{\columnwidth}{!}{
\label{PDEs11}
\begin{tabular}{|l|l|l|l|}
\hline
\rowcolor[HTML]{C0C0C0} 
\small{Predictive Density estimation}& \small{Estimator} &\small{PDF}  &   $\bar{y}_1$, $P_{0.1}$, $P_{0.5}$, $P_{0.9}$\\ \hline
\small{Bayes without additional information}& $\mathbb{T}_1(n_1-1, \bar{x}_1, \sqrt{\frac{2s_1^2}{n_1-1}})$& $\mathbb{T}_1(5, 11.37, 1.2)$ & 11.37, 9.6, 11.37, 13.14 \\\hline
\small{Bayes with additional information} & $\mathbb{ST}_1(n_1-1, \frac{\bar{x}_1-\bar{x}_2}{\sqrt{3s_1^2/(n_1-1)}},\bar{x}_1,\sqrt{\frac{2s_1^2}{n_1-1}})$& $\mathbb{ST}_1(5, 0.85, 11.37,1.2)$ & 11.45, 11.2, 11.44, 12.37 \\\hline
\end{tabular}
}
\caption{Predictive density estimators of future density of child first walks in group 1 along with their means, 10$^{th}$, 50$^{th}$ and 90$^{th}$ percentiles.}
\end{table}
Figure 3, helps to visualize different predictive density estimators and the corresponding means and percentiles in Table 4.
\begin{figure}[H] \label{vis}
    \centering 
    \includegraphics[width=0.85\textwidth]{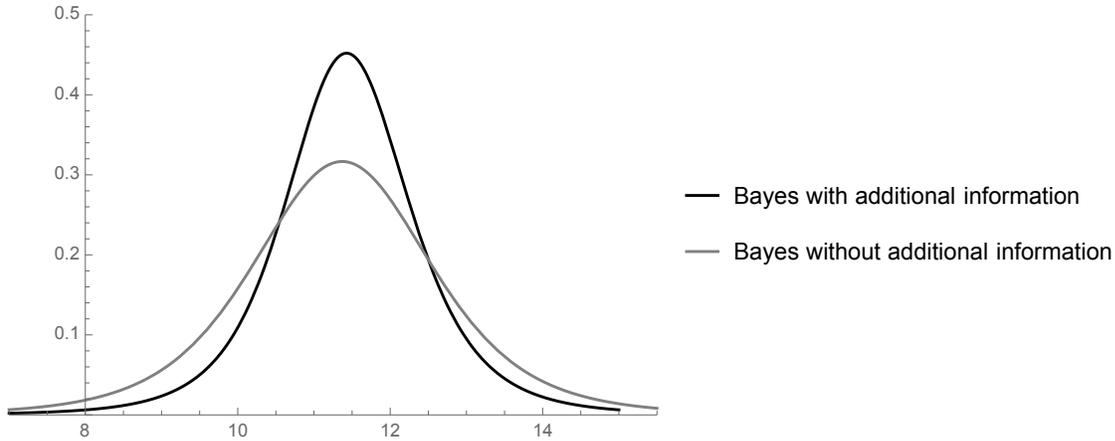}
    \caption{Visualization of Table 4. }
\end{figure}

\section{Concluding remarks}\label{conclude}
This paper extends the line of work which seeks to find Bayesian interpretations of the skew--normal densities to skew--Student $t$ distributions. We have shown that different kind of constraints on the parameter space in a hierarchical normal model, yield the Bayesian predictive densities belong to a class of weighted Student $t$ distributions. More specifically we studied the restrictions $\theta_1-\theta_2 \in \mathbb{R}_+^p$, and $\theta_1-\theta_2 \in [-m, m]^p$, in model (\ref{modelu}), which provides two different skew--Student $t$ distributions based on Definitions \ref{ST1} and \ref{st2} respectively. Results suggest Bayes predictive density estimators based on additional information performs better than the Bayes predictive density without considering additional information in term of KL risk function. Finally, some numerical comparison and important examples were done to support the results. 
\section*{Acknowledgement}
The author thanks Éric Marchand (Université de Sherbrooke) for his very helpful comments on the manuscript.
\section*{Appendix}\label{proof}
{\bf (A.1) Proof of Lemma \ref{pos}}\\

\begin{proof}
\textbf{(a)} We have
\begin{equation}\label{ttt}
\pi(\theta, \sigma^2\,|\, x, s^2)\propto (\sigma^2)^{-(p+k/2+1)} \exp \left\{\frac{-t}{2\sigma^2}\right\}\,\mathbb{I}_{A}(\theta_1- \theta_2)\,,
\end{equation}
where $ t=\norm{x_1-\theta_1}+\norm{x_2-\theta_2}+s^2$.
Now, letting $z=\frac{t}{2\sigma^2}$,  by integrating out $\sigma^2$, we have
\begin{align*}
\pi(\theta \,|\, x, s^2)&\propto\int_0^{\infty} (\sigma^2)^{-(p+k/2+1)}\exp\{-\frac{t}{2\sigma^2}\}\,\mathbb{I}_{A}(\theta_1- \theta_2)\, d\sigma^2\\
&\propto t^{-(p+k/2)}\,\mathbb{I}_{A}(\theta_1- \theta_2)\int_0^{\infty} z^{p+k/2-1}\exp\{-z\} \,dz\\  
&\propto
t^{-(p+k/2)}\,\mathbb{I}_{A}(\theta_1- \theta_2)\\
&\propto(\norm{x_1-\theta_1}+\norm{x_2-\theta_2}+s^2)^{-(p+k/2)}\mathbb{I}_{A}(\theta_1- \theta_2)\\
&\propto\left(1+\frac{\norm{x_2-\theta_2}}{s^2+\norm{x_1-\theta_1}}\right)^{-(p+k/2)}\!\!\left(1+\frac{\norm{x_1-\theta_1}}{s^2}\right)^{-(p+k/2)}\!\!\mathbb{I}_{A}(\theta_1- \theta_2).
\end{align*}
\textbf{(b)} We have
\begin{align*}
\pi(\theta_1 &\,|\, x, s^2)=\!\!\!\!\!\!\int\limits_{\{\theta_2:\,\theta_1-\theta_2 \in A\}}\!\!\!\!\!\! \pi(\theta \,|\, x, s^2)\, d\theta_2\\
&\propto  \left(1+\frac{\norm{x_1-\theta_1}}{s^2}\right)^{-(p+k/2)} 
\!\!\!\!\!\!\!\!\int\limits_{\{\theta_2: \,\theta_1-\theta_2 \in A\}} \!\!\!\!\!\! \left(1+\frac{\norm{x_2-\theta_2}}{s^2+\norm{x_1-\theta_1}}\right)^{-(p+k/2)}\,d\theta_2,\\
&\propto  \left(1+\frac{\norm{\theta_1-x_1}}{k(s^2/k)}\right)^{-(p+k/2)} 
\!\!\int\limits_{A}  \left(1+\frac{\norm{t-(\theta_1-x_2)}}{s^2+\norm{x_1-\theta_1}}\right)^{-(p+k/2)}\,dt\,,
\end{align*}
by the change of variable $t=\theta_1-\theta_2$.
\end{proof}
{\bf (A.2) Proof of Theorem \ref{adgeneral1}}.\\

By setting $U=\eta(\theta_1-x_1)$, and $\eta=\frac{1}{\sigma^2}$, one can write the joint density of $(U, \eta)\,|\, x, s^2$ as multiplication of equations (\ref{1}) and (\ref{2}) in Lemma \ref{joint}. So we have
\begin{align}
 \hat{q}_{\pi, A}&(y_1; x, s^2)\,=\,\mathbb{E}^{\,(U, \eta) \,|\, x, s^2} q(y_1 \,|\, x_1+\frac{U}{\sqrt{\eta}},\, \frac{1}{\eta}) \nonumber\\
 &=\int_{\mathbb{R}_+^p} \left(\int_{\mathbb{R}^p}(\frac{\eta}{2 \pi})^{\frac{p}{2}} e^{-\frac{\eta}{2}\norm{y_1-x_1-\frac{u}{\sqrt{\eta}}}}\frac{\phi_p(u)\Phi_p(\alpha_0+\alpha_1 u; 0)}{\Phi_p(\frac{\alpha_0}{\sqrt{1+\alpha_1^T\alpha_1}}; 0)}du\right)\pi(\eta \,|\, x, s^2)\,d\eta \nonumber\\
  &=\int_{\mathbb{R}_+^p} \frac{\eta^{\frac{k+p}{2}-1}e^{-\eta/2(s^2+\norm{y_1-x_1})}}{F_p(k; \frac{x_1-x_2}{\sqrt{2s^2}/k})\Gamma{(\frac{k}{2})}(\frac{2}{s^2})^{k/2}}\int_{\mathbb{R}^p}\frac{e^{-\norm{u}/2+\sqrt{\eta} u^T(y_1-x_1)}}{(2\pi)^\frac{p}{2}}\phi_p(u)\Phi_p(\alpha_0+\alpha_1 u; 0)\,du\,d\eta  \nonumber\\
 &=\frac{\Gamma(\frac{k+p}{2})}{\Gamma(\frac{k}{2})(\frac{2}{s^2})^{\frac{k}{2}}(2\pi)^{\frac{p}{2}}F_p(k, \frac{x_1-x_2}{\sqrt{2s^2}/k})}\left(\frac{s^2}{2}+\frac{\norm{y_1-x_1}}{4}\right)^{-\frac{k+p}{2}}
\nonumber \\
 & \times \int_{\mathbb{R}_+^p}\eta^{\frac{k+1}{2}-1} e^{-\eta \left(\frac{s^2}{2}+\frac{\norm{y_1-x_1}}{4}\right)}\frac{1}{\sqrt{2}}\Phi_p\left(\sqrt{\eta}\left(\frac{x_1-x_2}{\sqrt{3}}+\frac{y_1-x_1}{\sqrt{6}}; 0\right)\right)\, d\eta\nonumber \\
 &= \frac{\Gamma(\frac{k+p}{2})}{\Gamma(\frac{k}{2})(\frac{2}{s^2})^{\frac{k}{2}}(2\pi)^{\frac{p}{2}}F_p(k; \frac{x_1-x_2}{\sqrt{2s^2}/k})}\frac{1}{\sqrt{2}}\mathbb{E}^{\eta \,|\,x, s^2} \left[\Phi_p\left(\sqrt{\eta}\left(\frac{x_1-x_2}{\sqrt{3}}+\frac{y_1-x_1}{\sqrt{6}}\right); 0\right)\right],\nonumber
\end{align}
where $\eta \,|\,x, s^2 \sim Gamma(\frac{k+1}{2}, \, \frac{s^2}{2}+\frac{\norm{y_1-x_1}}{4}),$ applying the identity  (\ref{iden}) to above expectation completes the proof.

\end{document}